\newcommand{\QSL}{QSL\xspace}
\newcommand{\QSLs}{\QSL{\lowercase{s}}\xspace}
\newcommand{\aveE}[1]{\langle E^{#1} \rangle}
\newcommand{\absE}[1]{\langle |E|^{#1} \rangle}
\newtheorem{lemma}{Lemma}
\begin{document}
\title{Quantum Speed Limit With Forbidden Speed Intervals}
\author{H. F. Chau}
\email{hfchau@hku.hk}
\affiliation{Department of Physics and Center of Theoretical and Computational
 Physics, Pokfulam Road, Hong Kong}
\date{\today}

\begin{abstract}
 Quantum mechanics imposes fundamental constraints known as quantum speed
 limits (\QSLs) on the information processing speed of all quantum systems.
 Every \QSL known to date comes from the restriction imposed on the evolution
 time between two quantum states through the value of a single system
 observable such as the mean energy relative to its ground state.
 So far these restrictions only place upper bounds on the information
 processing speed of a quantum system.
 Here I report \QSLs each with permissible information processing speeds
 separated by forbidden speed intervals.
 They are found by a systematic and efficient procedure that takes the values
 of several compatible system observables into account simultaneously.
 This procedure generalizes almost all existing \QSL proofs; and the new \QSLs
 show a novel first-order phase transition in the minimum evolution time.
\end{abstract}

\pacs{03.65.Ta, 03.67.-a}

\maketitle

\section{Introduction} \label{Sec:Intro}
 Time is a valuable and often an irreplaceable resource.  Minimizing runtime is
 one of the most important driving forces behind hardware, software and
 computational complexity researches.  Quantum mechanics, as a fundamental law
 of nature, gives ultimate constraints known as \QSLs on the runtime and hence
 information processing speed of a computer, classical and quantum
 alike~\cite{Lloyd00,PHYSCOMP96,ML98}.
 \QSLs can be defined by considering the distance between a normalized initial
 state $|\psi\rangle$ and the normalized state $|\varphi\rangle$ that evolves
 from it after a time $\tau$, as given by their mutual fidelity $F = \left|
 \langle\psi|\varphi\rangle \right|^2$.
 Since the speed of evolution is determined by the energy of the system, there
 exists \QSLs in the form $\tau \ge g_{\hat{\mathcal O}}(F,v)$, where $v$ is
 the expectation value of an observable $\hat{\mathcal O}$ associated with the
 energy of the
 system~\cite{PHYSCOMP96,ML98,GLM03,MT45,Fleming73,Bhattacharyya83,Uhlmann92,%
Pfeifer93,ZZ06,JK10,Chau10,LC13}.
 The significance of such \QSLs is that they can relate the evolution time
 needed to achieve a given distance between initial and final states to only a
 single observable property of the system, allowing an efficient evaluation of
 the physical resources necessary to achieve maximum quantum information
 processing speed.
 For example, using the energy standard deviation $\Delta E$ as the observable,
 the corresponding \QSL, namely, $\tau \ge \hbar \cos^{-1} (\sqrt{F}) / \Delta
 E$, is the famous time-energy uncertainty
 relation~\cite{MT45,Fleming73,Bhattacharyya83,Uhlmann92,Pfeifer93}.

 The more information about the quantum system is given, the more stringent the
 \QSL one should be able to obtain.
 At one of the extreme ends that nothing is known about the system, the only
 thing one can say is the trivial bound $\tau \ge 0$.
 At the other extreme end that the Hamiltonian and initial state are completely
 known, the values of all the $\tau$'s are fixed and can be computed at least
 in principle.
 Thus, it is instructive to investigate what kind of \QSLs one can deduce when
 partial information in the form of more than one observable of the system are
 given.  In addition to quantum mechanics and quantum information, this problem
 is also of interest in statistical physics.
 For example, the constraints and information on $\tau$ and their related phase
 diagrams as a function of the kind and amount of information given are
 important questions that have never been studied.
 In fact, very limited progress has been made along these lines.  All relevant
 works to date simply consider the constraints that logically follow from the
 \QSLs for each of the observables~\cite{GLM03,LT09} instead of analyzing the
 restrictions due to these compatible observables holistically.

 Here I report a powerful method to study \QSLs by extending earlier proof
 techniques~\cite{GLM03,Chau10,LC13}.  Besides finding \QSLs when several
 compatible observables of the system are given, this method also provides a
 unified way to show all known \QSLs.  Through these new \QSLs, I find that the
 minimum possible evolution time can exhibit a new first-order phase transition
 with fidelity $F$ being the order parameter.
 Finally, I will define and briefly discuss the reverse problem of \QSL
 construction.

\section{Construction Of The \QSL}
\label{Sec:Main}
\subsection{An auxiliary inequality}
\label{Subsec:auxiliary}
 I write the time-independent Hamiltonian $H$ in the diagonal form $\sum_j E_j
 |E_j\rangle\langle E_j|$.  Surely, any normalized quantum state $|\psi\rangle$
 can be expressed in the form $\sum_j \alpha_j |E_j\rangle$ with $\sum_j
 |\alpha_j|^2 = 1$.  Suppose $|\varphi\rangle = e^{-i H \tau / \hbar}
 |\psi\rangle$ is the state obtained by evolving $|\psi\rangle$ by $H$ for a
 time $\tau \ge 0$.  Then, the fidelity $F$ between $|\psi\rangle$
 and $|\varphi\rangle$ obeys
\begin{align}
 \sqrt{F} & = \left| \sum_j |\alpha_j|^2 e^{-i E_j \tau / \hbar} \right| =
  \left| e^{i\theta} \right| \left| \sum_j |\alpha_j|^2 e^{-i E_j \tau / \hbar}
  \right| \nonumber \\
 & \ge \sum_j |\alpha_j|^2 \cos \left( \frac{E_j \tau}{\hbar} - \theta \right)
 \label{E:general_inequality}
\end{align}
 for any real-valued $\theta$.  Here I have used the fact that the magnitude of
 a complex number is greater than or equal to its real part to arrive at the
 above inequality.  Actually, inequality~\eqref{E:general_inequality} is an
 extension of those used in Refs.~\cite{Chau10,LC13}.

 Let $p(x)$ be a function satisfying $p(x) \le \cos (x-\theta)$ whenever $x\ge
 0$.  Then,
\begin{equation}
 \sqrt{F} \ge \sum_j |\alpha_j|^2 p(E_j \tau / \hbar) \label{E:pre_inequality}
\end{equation}
 provided that $E_j \ge 0$ for all $j$.
 The following subsection shows that this $p(x)$ can be chosen to be a
 polynomial-like function in the form $\sum_{k=0}^n c_k x^{s k}$ with $s > 0$
 efficiently.  (This is a more general choice for $p(x)$ than in all previous
 studies~\cite{PHYSCOMP96,ML98,GLM03,ZZ06,Chau10,LC13}.)

\subsection{Proof of the existence of efficiently computable
 $\boldsymbol{p(x)}$}
\label{Subsec:efficiency}
 It suffices to show the existence of a polynomial $q(x) \le \cos (x^{1/s} -
 \theta) \equiv f(x)$ whenever $x\ge 0$.  In fact, $q(x)$ exists even if I
 further demand that it meets $f(x)$ at finitely many distinct non-negative
 points, say, $x_\ell$'s.
 I do this by considering the Hermite interpolating polynomial $\tilde{q}(x)$
 that satisfies $\tilde{q}^{(k)}(x_\ell) = f^{(k)}(x)$ for $k=0,1,\ldots,
 2j_\ell-1$ for some $j_\ell \in {\mathbb Z}^+$.  This polynomial can be
 constructed efficiently~\cite{SB02}.  Surely, $f(x) - \tilde{q}(x) = a_\ell
 (x-x_\ell)^{2j_\ell} + \text{O}((x-x_\ell)^{2j_\ell+1})$ locally around each
 $x_\ell$.  For randomly chosen $x_\ell$'s and $j_\ell$'s, all the $a_\ell$'s
 are non-zero almost surely.  And in the singular case in which some of the
 $a_\ell$ equals $0$, I simply randomly choose an extra distinct point $x_a$
 and demand further that $\tilde{q}(x)$ obeys $\tilde{q}^{(k)}(x_a) =
 f^{(k)}(x_a)$ for $k=0,1,\ldots, 2j_a-1$ for some randomly chosen positive
 integer $j_a$.  Then, the modified $\tilde{q}(x)$ locally agrees with $f(x)$
 up to an even power of $x-x_\ell$ at each of the interpolating point $x_\ell$
 almost surely.

 Recall that there are efficient and stable algorithms to find all the real
 roots of a polynomial~\cite{McNamee07}.  Apply one such algorithm to find the
 largest real root $x_u$ of the equation $\tilde{q}(x) = -1$, I can bound the
 non-negative roots of $\tilde{q}(x) = f(x)$ to the interval $[0,x_u]$.  Since
 $f(x) = \cos (x^{1/s} - \theta)$ is a smooth function of bounded variation in
 $[0,x_u]$, I may use interval arithmetic to efficiently find all the
 sub-intervals of $[0,x_u]$, if any, in which $f(x) - \tilde{q}(x)$ is
 negative~\cite{AH83}.  Actually, there are at most a finite number of these
 sub-intervals; and they are present if and only if
\begin{itemize}
 \item $f(x) - \tilde{q}(x) = a_\ell (x - x_\ell)^{2j_\ell} + \text{O}((x-
 x_\ell)^{2j_\ell+1})$ in the neighborhood of $x_\ell$ with $a_\ell < 0$; or
 \item for two consecutive distinct roots $x_1$ and $x_2$ of $f(x) -
  \tilde{q}(x)$, there are $x_1 < y_1 < y_2 < x_2$ such that $f(x) -
  \tilde{q}(x) < 0$ for all $x \in (y_1,y_2)$.
\end{itemize}
 In the first case, I may bring $f(x) - \tilde{q}(x)$ up above zero by adding
 a term $b_\ell (x-x_\ell)^{2j_\ell}$ with $b_\ell > a_\ell$.  Whereas in the
 second case, this can be done by adding a term in the form $b' \prod_\ell
 (x-x_\ell)^{2(j_\ell + \kappa_\ell)}$ with $\kappa_\ell = 1$ if $a_\ell > 0$
 and $\kappa_\ell = 0$ if $a_\ell < 0$.  Note that this additional term does
 not affect the local behavior of $f(x) - \tilde{q}(x)$ around those $x_\ell$'s
 with $a_\ell > 0$.  Since there are only a finite number of such
 sub-intervals, I can efficiently find $b > 0$ such that $f(x) - q(x) \ge 0$
 for all $x\ge 0$ where $q(x) = \tilde{q}(x) - b \prod_\ell
 (x-x_\ell)^{2(j_\ell+\kappa_\ell)}$.

 Last but not least, I remark that since what one really need is $\tilde{q}(x)
 \le f(x)$ for $x \ge 0$.  So, whenever $x_\ell = 0$, namely, the boundary
 point, there is no need to demand $f(x) - \tilde{q}(x) = a_\ell x^{2j_\ell} +
 \text{O}(x^{2j_\ell+1})$ for $x$ sufficiently close to $0$.  Suppose $f(x) -
 \tilde{q}(x) = a x^j + \text{O}(x^{j+1})$ in the neighborhood of $x = 0$ for
 some positive integer $j$ and $a \ne 0$.  Then, the transformation $q(x) =
 \tilde{q}(x) - b x^{j+\kappa} \prod_\ell' (x-x_\ell)^{2(j_\ell +
 \kappa_\ell)}$ for a sufficiently large $b > 0$ will do.  Here $\kappa = 1$ if
 $a > 0$ and $\kappa = 0$ otherwise.  Besides, the primed product is over all
 $x_\ell \ne 0$.

\subsection{Construction of the \QSL from $\boldsymbol{p(x)}$}
\label{Subsec:QSL_Construction}
 Substituting this polynomial-like $p(x)$ into
 inequality~\eqref{E:pre_inequality}, I conclude that
\begin{subequations}
\label{E:inequality_set}
\begin{equation}
 \sqrt{F} \ge \sum_{k=0}^n c_k \aveE{s k} \left( \frac{\tau}{\hbar} \right)^{s
 k} ,
 \label{E:inequality_ave_E}
\end{equation}
 whenever $E_j \ge 0$ for all $j$, where $\aveE{r} \equiv \sum_j |\alpha_j|^2
 E_j^r$ denotes the expectation value of the $r$th moment of the energy of the
 system.  Furthermore, in the case of $\theta = 0$, I may rewrite
 inequality~\eqref{E:general_inequality} as $\sqrt{F} \ge \sum_j |\alpha_j|^2
 \cos ( |E_j| \tau / \hbar )$.  So the above arguments lead to
\begin{equation}
 \sqrt{F} \ge \sum_{k=0}^n c_k \absE{s k} \left( \frac{\tau}{\hbar} \right)^{s
 k}
 \label{E:inequality_abs_E}
\end{equation}
\end{subequations}
 irrespective of the signs of $E_j$'s, where $\absE{r} \equiv \sum_j
 |\alpha_j|^2 |E_j|^r$.  I remark that inequality~\eqref{E:inequality_set}
 becomes an equality if and only if $e^{-i\theta} \langle\psi|\varphi\rangle$
 is real and non-negative together with $\cos (E_j \tau / \hbar - \theta) =
 p(E_j \tau / \hbar)$ for all $j$ with $\alpha_j \ne 0$.

 Consequently, suppose the values of the compatible (time-independent)
 observables of the system $\aveE{k}$ (or $\absE{k}$) are known for $k=1,2,
 \ldots,n$.  Then, given a fixed fidelity $F\in [0,1]$, the required evolution
 time must satisfy inequality~\eqref{E:inequality_set}.  Since there are
 efficient numerical algorithms to find real roots of a polynomial
 equation~\cite{Pan97}, I can find the permissible intervals for $\tau$
 readily.  As the reference energy level has no physical meaning, I may tighten
 the permissible region for $\tau$ by taking the intersection over all the
 permissible intervals obtained by replacing $\aveE{k}$ by $\langle (E+a)^k
 \rangle$ (or $\absE{k}$ by $\langle |E+a|^k \rangle$) for all $a$ and $k$ ---
 a trick first used in Ref.~\cite{Chau10}.  Finally, I may further strengthen
 the bound by taking the intersection over the permissible regions for $\tau$
 obtained by all degree $\le n$ polynomials $p(x) \le \cos (x-\theta)$ for $x
 \ge 0$.  There is no known efficient way to perform this very last task,
 however.

\subsection{Recovering all existing \QSLs}
\label{Subsec:recovering_existing_QSL}
 The above procedure, even without the final step, is already powerful enough
 to prove all the known \QSLs.
 I choose $p(x)$ to be the function $1 - a x^b$ which meets the curve $\cos
 (x-\theta)$ at two distinct points for $x\ne 0$, namely, at $x = 0$ and $x_c$
 such that $p(x)$ actually touches the curve $\cos (x-\theta)$ tangentially at
 the latter point.  Then, in the event that $\theta\ne 0$,
 inequality~\eqref{E:inequality_ave_E} gives the
 Margolus-Levitin bound~\cite{PHYSCOMP96,ML98,GLM03} and its
 generalization~\cite{ZZ06}; whereas in the event that $\theta = 0$,
 inequality~\eqref{E:inequality_abs_E} becomes the Chau bound~\cite{Chau10} and
 its generalization~\cite{LC13}.

 To recover the time-energy uncertainty relation, I use the inequality $\cos x
 \ge 1 - x^2 / 2$.  From inequality~\eqref{E:inequality_abs_E}, I get the bound
 $\sqrt{F} \ge 1 - \absE{2} (\tau / \hbar)^2 / 2 = 1 - \aveE{2} (\tau /
 \hbar)^2 / 2$.  This bound can be optimized by choosing the reference energy
 level to be the average energy of the system.  The result is $\sqrt{F} \ge 1 -
 (\tau \Delta E / \hbar)^2 /2$ provided that the system evolves under a
 time-independent Hamiltonian.   Now I consider evolving the system for an
 infinitesimal time $\Delta \tau$.  The constraint set by the above inequality
 becomes $\sqrt{F} \ge \cos (\Delta\tau \Delta E / \hbar) +
 \text{O}((\Delta\tau)^4) = \cos (\Delta\tau \Delta E / \hbar +
 \text{O}((\Delta\tau)^3))$.  Hence, the corresponding infinitesimal change in
 Bures angle must obey $\Delta\vartheta \equiv \cos^{-1}(\sqrt{F}) \le
 \Delta\tau \Delta E / \hbar + \text{O}((\Delta\tau)^3)$.  Since Bures angle is
 a metric~\cite{Uhlmann95}, by integrating over a finite time, I conclude that
 for a time-dependent Hamiltonian, the evolution time $\tau$ obeys $\int_0^\tau
 \Delta E \ d\tau \ge \hbar \vartheta = \hbar \cos^{-1} ( \sqrt{F})$, which is
 the time-energy uncertainty relation for time-dependent Hamiltonian.  If the
 Hamiltonian is time-independent, the above expression becomes the famous
 inequality $\tau \ge \hbar \cos^{-1}(\sqrt{F}) / \Delta E$.

\section{Existence of \QSLs with forbidden speed intervals}
\label{Sec:FSI}
\subsection{General discussions}
\label{Subsec:FSI_general}
 Note that a degree greater than one polynomial is in general not monotonic.
 Thus, the domain for such a polynomial to be greater than or equal to a
 certain fixed given value is in general consists of finite number of
 intervals.
 Thus, by picking the polynomial-like function $p(x)$ with $n > 1$, I have the
 surprising situation
 that the permissible evolution time $\tau$ given by
 inequality~\eqref{E:inequality_set} is in general separated by forbidden time
 intervals.  This is not completely unexpected because unlike all previous
 \QSLs, here the quantum system is constrained by more than one compatible
 observables.

 One may question the genuineness of these forbidden time intervals as some of
 the apparently permissible time intervals are illusory because they could be
 the result of poorly chosen $\theta$ and $p(x)$.
 In other words, perhaps these so-called forbidden time intervals will
 disappear once a \QSL is obtained from a carefully picked $\theta$ and $p(x)$.
 Nevertheless, the example below shows the contrary.

 Consider the initial state $|\varphi_e(0)\rangle = [\sqrt{7} |0\rangle +
 \sqrt{2} ( |\mathord{-}\hbar\rangle + |\hbar\rangle +
 |\mathord{-}11\hbar/5\rangle + |11\hbar/5\rangle ) ] / \sqrt{15}$ evolving
 under the time-independent Hamiltonian $H_e = \sum_{j=0,\pm 1,\pm 11/5}$
 $\hbar E_j|\hbar E_j\rangle\langle\hbar E_j|$.  Fig.\ref{F:example}a depicts
 the time evolution curve for the root fidelity $\sqrt{F} = \left|
 \langle\varphi_e(0)|\varphi_e(\tau)\rangle \right|$, showing that the first
 time for $\sqrt{F}$ to reach $\sqrt{F_{c1}} = 0$ and $\sqrt{F_{c2}} \approx
 0.0682$ are at $\tau = \tau_{c1} \approx 9.693$ and $\tau_{c2} \approx 4.110$,
 respectively.
 I choose $p(x) = p_e(x)$ to be the Hermite interpolating polynomial satisfying
 the following constraints: $p_e^{(j)}(0) \equiv d^j p_e (0)/dx^j =
 \cos^{(j)}(0)$ for $j = 0,1,2$, $p_e^{(j)}(\pm\tau_{c1}) = \cos^{(j)}
 (\pm\tau_{c1})$ for $j = 0,1,2,3$, $p_e^{(j)}(\pm\tau_{c2}) = \cos^{(j)}
 (\pm\tau_{c2})$ for $j = 0,1$, $p_e^{(j)}(\pm 11\tau_{c1}/5) = \cos^{(j)}
 (\pm 11\tau_{c1}/5)$ for $j = 0,1,2,3,4,5$, and $p_e^{(j)}(\pm 11\tau_{c2}/5)
 = \cos^{(j)}(\pm 11\tau_{c2}/5)$ for $j = 0,1,2,3$.  By construction, $p_e(x)
 = \cos x$ at $x \in {\mathcal S}_e = \{0,\pm\tau_{c1},\pm\tau_{c2},\pm
 11\tau_{c1}/5, \pm 11\tau_{c2}/5 \}$.  By construction, $p_e(x)$ is even and
 of degree~34.  Fig.~\ref{F:example}b shows that $p_e(x)$ is a very good
 approximation to $\cos x$ for $0\leq x \lesssim 22$.  More importantly, I show
 in the next subsection that $p_e(x) \le \cos x$ for all real $x$.

\subsection{Proof of $\boldsymbol{p_e(x) \le \cos x}$}
\label{Subsec:proof_of_p_e_inequality}
 It is straightforward to check that $p_e(x) \ge \cos x$ in the neighborhood of
 $x \in {\mathcal S}_e$.  To show that this is also true for all $x \in
 {\mathbb R}$, I need the following lemma.

\begin{lemma} \label{Lem:Rolle}
 Let $f(x) \colon {\mathbb R} \to {\mathbb R}$ be a real-valued differentiable
 function with exactly $n$ real roots counted by multiplicity.  Then, $f'(x)$
 has at least $n-1$ real roots counted by multiplicity.
\end{lemma}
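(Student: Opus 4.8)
The plan is to reduce the statement to the classical Rolle theorem by separately accounting for the two distinct sources of roots of $f'$: those forced strictly between consecutive \emph{distinct} roots of $f$, and those inherited at a \emph{multiple} root of $f$. First I would list the distinct real roots of $f$ as $r_1 < r_2 < \cdots < r_k$ with multiplicities $m_1, m_2, \ldots, m_k$, so that $\sum_{i=1}^{k} m_i = n$. Here I take the multiplicity of a root $r$ to be the least integer $m$ with $f^{(m)}(r) \neq 0$, i.e.\ $f(r) = f'(r) = \cdots = f^{(m-1)}(r) = 0 \neq f^{(m)}(r)$; this is the sense in which \emph{counted by multiplicity} is meant, and it is well defined for the smooth functions $f = p_e(x) - \cos x$ arising in the application.

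Next, for the interior contribution I would apply the ordinary Rolle theorem on each of the $k-1$ consecutive closed intervals $[r_i, r_{i+1}]$. Since $f(r_i) = f(r_{i+1}) = 0$ and $f$ is continuous and differentiable, each such interval yields at least one point $s_i \in (r_i, r_{i+1})$ with $f'(s_i) = 0$. Because the open intervals $(r_i, r_{i+1})$ are pairwise disjoint and contain no $r_j$, the points $s_1, \ldots, s_{k-1}$ are distinct from one another and from every $r_j$; this already accounts for at least $k-1$ roots of $f'$.

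Then, for the multiplicity contribution, I would observe that differentiation lowers the order of vanishing by exactly one: if $f^{(j)}(r_i) = 0$ for $j = 0, \ldots, m_i - 1$, then $(f')^{(j)}(r_i) = f^{(j+1)}(r_i) = 0$ for $j = 0, \ldots, m_i - 2$, so that $r_i$ is a root of $f'$ of multiplicity at least $m_i - 1$. Summing over the distinct roots gives $\sum_{i=1}^{k} (m_i - 1) = n - k$ roots of $f'$ situated at the $r_i$, and these are disjoint from the Rolle points $s_i$. Adding the two disjoint tallies yields at least $(k-1) + (n-k) = n-1$ real roots of $f'$ counted by multiplicity, which is Lemma~\ref{Lem:Rolle}.

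I expect the only real subtlety to be the bookkeeping of multiplicity rather than any hard analysis: one must fix a notion of multiplicity under which \emph{differentiation decreases multiplicity by one} is valid, and one must check that the interior Rolle roots and the inherited multiple roots are never double counted. Both points are dispatched above, the latter precisely because each $s_i$ lies strictly inside $(r_i, r_{i+1})$ and hence differs from all the $r_j$. No estimate beyond the mean-value (Rolle) theorem is required, so the analytic content is light; the care lies entirely in the combinatorics of the count.
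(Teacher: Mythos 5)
Your proof is correct and follows essentially the same route as the paper's: Rolle's theorem supplies a critical point strictly between each pair of consecutive distinct roots, and a root of multiplicity $m$ persists as a root of $f'$ of multiplicity $m-1$; summing the two disjoint tallies gives $n-1$. The paper states these two facts without carrying out the explicit count, so your version is simply a more detailed write-up of the same argument.
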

\begin{proof}
 The lemma is a simple consequence of the following two facts.  First, if $x_1,
 x_2$ are two distinct consecutive roots of $f$, then Rolle's theorem implies
 that there is a root $\tilde{x} \in [x_1,x_2]$ for $f'$.  Second, suppose
 $x_1$ is a multiple root of $f$ of multiplicity $k > 1$, then clearly $x_1$ is
 a root of $f'$ of multiplicity $k-1$.
\end{proof}

 By construction, $x = 0$ is a root of multiplicity~$3$ for the even function
 $g(x) = p_e(x) - \cos x$.  Similarly, by counting the multiplicity of roots of
 $g(x)$ at $x\in {\mathcal S}_e \setminus \{ 0 \}$, I conclude that $g(x)$ has
 at least~$35$ real roots.  Suppose it had more than~$35$ such roots, then the
 even function $g(x)$ should have at least two more real roots --- one
 positive, one negative.  By Lemma~\ref{Lem:Rolle}, $g^{(32)}(x) =
 p_e^{(32)}(x) - \cos x$ would have at least~$35+2-32 = 5$ real roots.
 However, a plot of $\cos x$ and the quadratic function $p_e^{(32)}(x)$ in
 Fig.~\ref{F:example}c shows that $g^{(32)}(x)$ only has~$4$ roots counting by
 multiplicity in the range $x\in [-8,8]$; and $g^{(32)}(x)$ does not have any
 real root outside this range as $p_e^{(32)}(x) < -1$ for $|x| > 8$.  Thus,
 ${\mathcal S}_e$ is the set of all real roots of $g(x)$.  Since $g(x) \le 0$
 in the neighborhood of these roots, the continuity of $g$ implies that $g(x)
 \le 0$ for all $x\in {\mathbb R}$.

\begin{figure*}[t]
 \centering\includegraphics[width=7.5cm]{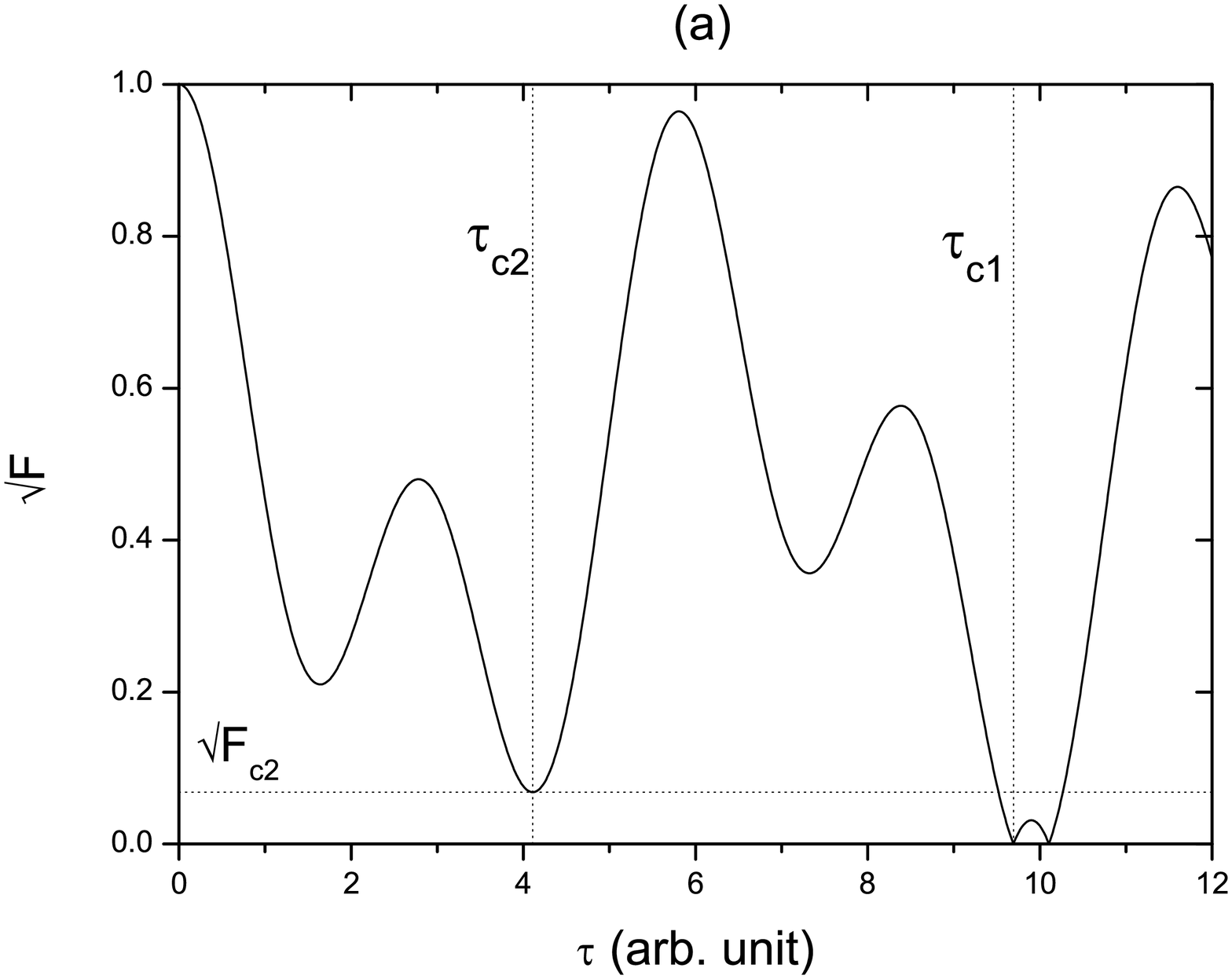}
 \centering\includegraphics[width=7.5cm]{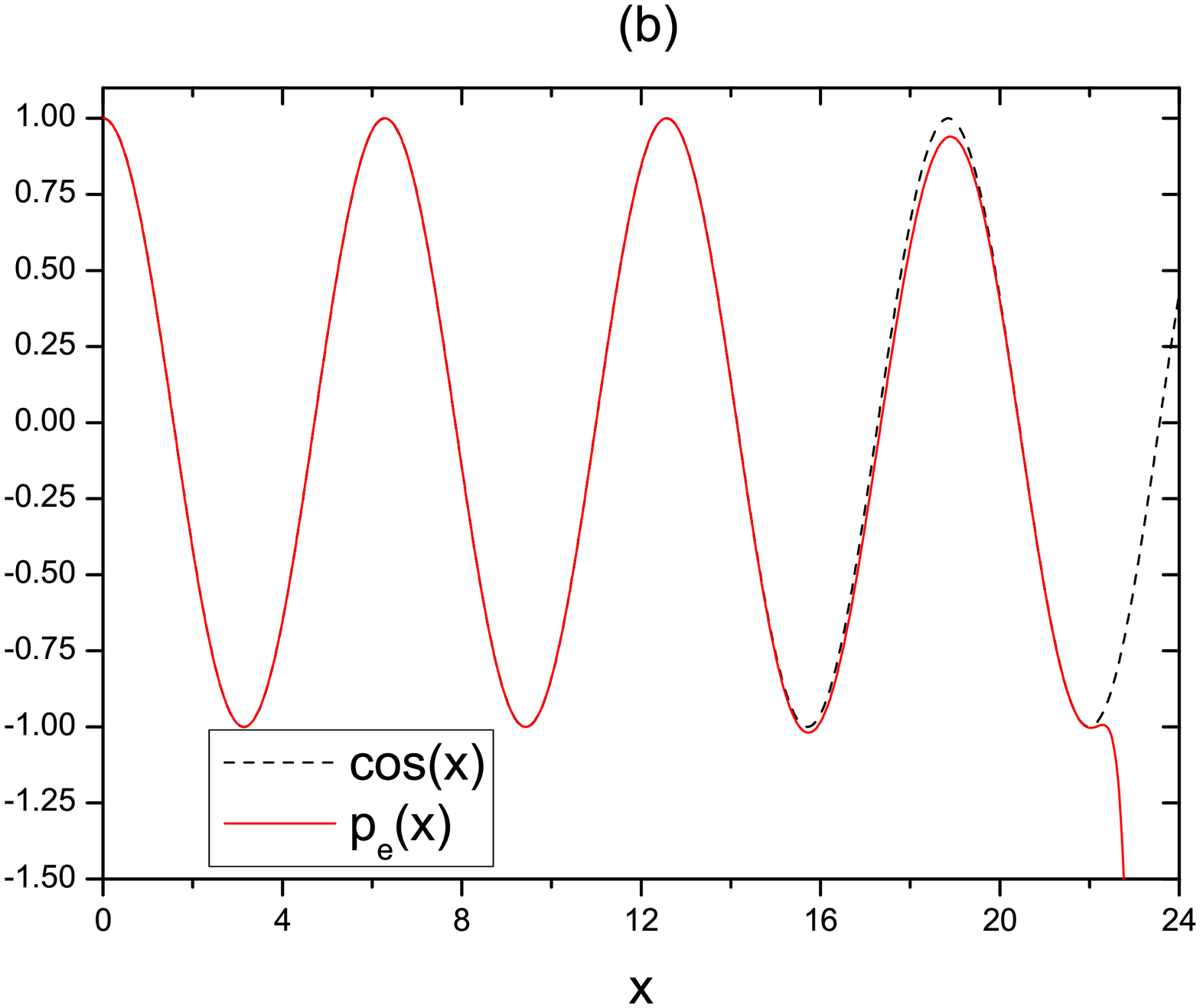}
 \centering\includegraphics[width=7.5cm]{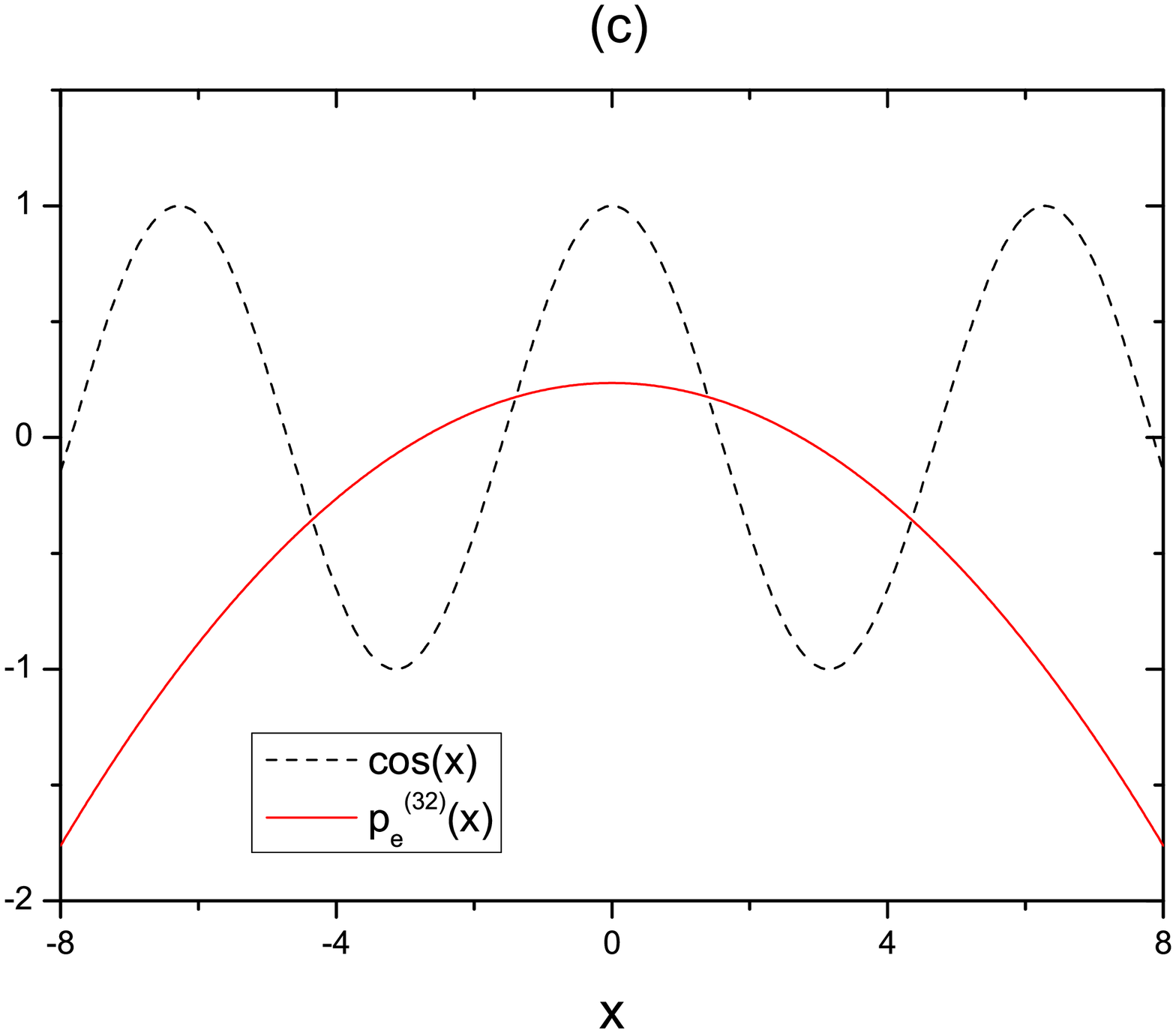}
 \centering\includegraphics[width=7.5cm]{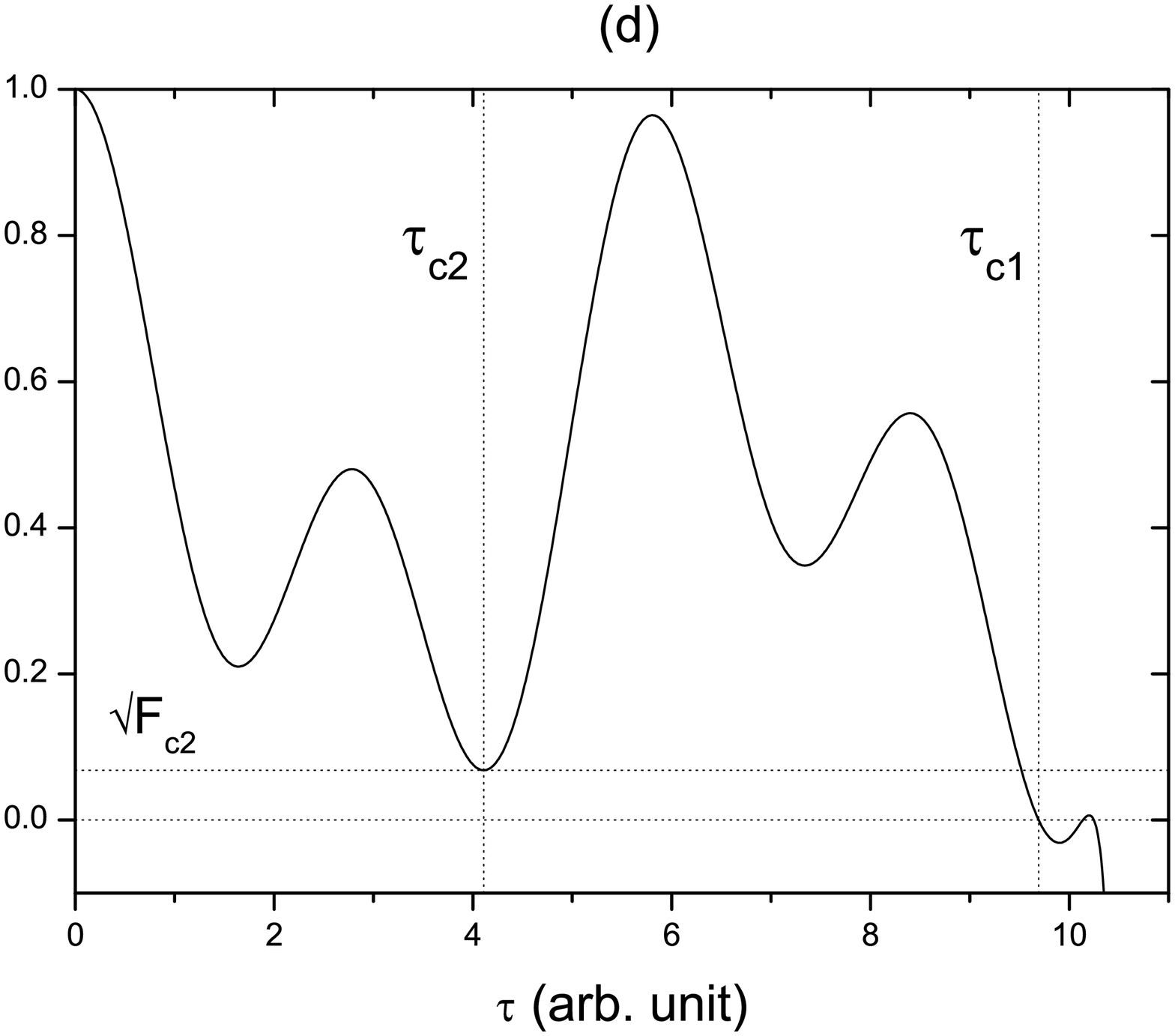}
 \caption{[color online.]  (a)~The time evolution of root fidelity $\sqrt{F}$
  for $|\varphi_e(0)\rangle$ under the Hamiltonian $H_e$.
  (b)~The polynomial $p_e(x)$ is a very good approximation of $\cos x$ for $0
  \le x \lesssim 22$.
  (c)~The curves $\cos(x)$ and $p_e^{(32)}(x)$ intersect at four distinct
  points only.
  (d)~A plot of the R.H.S. of inequality~\eqref{E:inequality_abs_E} with $p(x)
  = p_e(x)$ and $\absE{2k} = 4 \hbar^{2k} [1 + (11/5)^{2k} ] / 15$ for $k =
  1,2,\ldots,17$.
  \label{F:example}
 }
\end{figure*}

 Actually, this method can be adapted to show $p(x) \le \cos (x-\theta)$ for
 all $x\ge 0$ for a variety of polynomial $p(x)$ constructed out of Hermite
 interpolation.

\subsection{\QSL and a first-order phase transition}
\label{Subsec:FOPT}
 Since $p_e(x) \le \cos x$ for all $x \ge 0$, it induces the following \QSL:
\begin{widetext}
\begin{align}
 \sqrt{F} & \le 1 - 5.0000\times 10^{-1} \absE{2} \left( \frac{\tau}{\hbar}
  \right)^2 + 4.1667\times 10^{-2} \absE{4} \left( \frac{\tau}{\hbar} \right)^4
  - 1.3889\times 10^{-3} \absE{6} \left( \frac{\tau}{\hbar} \right)^6 \nonumber
  \\
 & \quad +2.4802\times 10^{-5} \absE{8} \left( \frac{\tau}{\hbar} \right)^8 
  -2.7557\times 10^{-7} \absE{10} \left( \frac{\tau}{\hbar} \right)^{10}
  + 2.0876\times 10^{-9} \absE{12} \left( \frac{\tau}{\hbar} \right)^{12}
  \nonumber \\
 & \quad -1.1469\times 10^{-11} \absE{14} \left( \frac{\tau}{\hbar}
  \right)^{14} + 4.7766\times 10^{-14} \absE{16} \left( \frac{\tau}{\hbar}
  \right)^{16} -1.5585\times 10^{-16} \absE{18} \left( \frac{\tau}{\hbar}
  \right)^{18} \nonumber \\
 & \quad + 4.0798\times 10^{-19} \absE{20} \left( \frac{\tau}{\hbar}
  \right)^{20}
  -8.6954\times 10^{-22} \absE{22} \left( \frac{\tau}{\hbar}
  \right)^{22} +1.5125\times 10^{-24} \absE{24} \left( \frac{\tau}{\hbar}
  \right)^{24} \nonumber \\
 & \quad -2.1160\times 10^{-27} \absE{26} \left( \frac{\tau}{\hbar}
  \right)^{26} + 2.2929\times 10^{-30} \absE{28} \left( \frac{\tau}{\hbar}
  \right)^{28} -1.7955\times 10^{-33} \absE{30} \left( \frac{\tau}{\hbar}
  \right)^{30} \nonumber \\
 & \quad + 8.9531\times 10^{-37} \absE{32} \left( \frac{\tau}{\hbar}
  \right)^{32} 
  -2.1140\times 10^{-40} \absE{34} \left( \frac{\tau}{\hbar} \right)^{34} .
 \label{E:complete_example_inequality}
\end{align}
\end{widetext}
 In particular, by putting $\absE{2k} = \aveE{2k} = 4 \hbar^{2k} [1 +
 (11/5)^{2k} ] / 15$ for $k = 1,2,\ldots,17$, this construction leads to a \QSL
 which gives a tight lower bound for the evolution time in the cases of
 $\sqrt{F} = \sqrt{F_{c1}}$ and $\sqrt{F_{c2}}$.  Fig.~\ref{F:example}d depicts
 that for root fidelity $\sqrt{F} = \sqrt{F_{c1}} = 0$, the corresponding \QSL
 is $\tau \in [\tau_{c1},10.138] \cup [10.248,+\infty)$.  Combined with the
 example of the evolution of $|\varphi_e(0)\rangle$, I conclude that whenever a
 quantum state with $\absE{2k}$ is equal to the value given in the previous
 paragraph for $k=1,2,\ldots ,17$, the minimum evolution time $\tau_{\min}$ for
 it to evolve to another state of root fidelity~$0$ is $\tau_{c1}$.  By
 gradually increasing $\sqrt{F}$, the allowable region for $\tau$ increases and
 $\tau_{\min} \le \tau_{c1}$.  Most importantly, by increasing $\sqrt{F}$ to
 $\sqrt{F_{c2}} \approx 0.0682$, the allowable $\tau$ becomes $\{ \tau_{c2} \}
 \cup [9.519,+\infty)$ with $\tau_{\min} = \tau_{c2}$.  That is, a new
 permissible time interval appears and a genuine forbidden evolution interval
 $(\tau_{c2},9.519)$ is formed.  Besides, $\tau_{\min}$ shows first-order phase
 transition at $\sqrt{F} = \sqrt{F_{c2}}$.  This is the first \QSL that
 captures this type of phase transition.
 Further significance of this result
 is reported in Appendix~\ref{App:significance}.

\subsection{The reverse problem}
\label{Subsec:reverse_problem}
 Note that the above method to construct a \QSL with genuine forbidden speed
 intervals is generic.  In fact, it brings us to the following reverse problem,
 which has never been studied before.  Given an initial state, an Hamiltonian
 and a required fidelity $F$, is it possible to find a \QSL whose minimum
 permissible evolution time equals the actual evolution time needed?
 By modifying the proof of the existence of $p(x)$, I show in
 Appendix~\ref{App:reverse_problem} that the answer is affirmative in
 finite-dimensional Hilbert space.

\section{Discussions and outlook}
\label{Sec:Outlook}
 To summarize, I have reported an efficient method to construct new \QSLs.  An
 important feature of this method is that by specifying a finite number of
 compatible observables in the form of various moments of energy of the system,
 the resultant \QSL is independent of the Hilbert space dimension.  Thus, the
 two most important consequences of this construction, namely, the existence of
 forbidden speed intervals and certain first-order phase transition are very
 strong results since they cannot come from an overly restricted set of
 constraints on a low-dimensional quantum system that almost fixing the
 Hamiltonian and the initial state.
 More importantly, this study opens up a more general research direction on the
 tradeoff between the amount of partial information describing a quantum system
 and the constraints on its information processing capability in which a lot of
 works can be done.

\medskip
\begin{acknowledgments}
 I thank F.\ K.\ Chow, C.-H.\ F.\ Fung and C.\ Y.\ Wong for their useful
 discussions.  This work is supported in part by the RGC Grants HKU~700712P and
 HKU8/CRF/11G of the Hong Kong SAR Government.
\end{acknowledgments}

\appendix

\section{Further Significance Of The \QSL Associated With The Degree~34
 Polynomial $\boldsymbol{p_e(x)}$ Reported In The Main Text
 \label{App:significance}}
 
 Recall that for fidelity $F = 0$, the \QSL reported in inequality~(4) in the
 main text leads to a minimum evolution time of $\tau_{c1}$ for $\absE{2k} =
 4\hbar^{2k} [1+(11/5)^{2k}]/15$ for $k = 1,2,\ldots,17$.  Further, this bound
 is tight for it can be achieved by the state $|\varphi_e(0)\rangle$ under the
 evolution of the Hamiltonian $H_e$.  Note that the highest and lowest energy
 eigenvector components of $|\varphi_e(0)\rangle$ are $|\pm 11\hbar/5\rangle$.
 Hence, the phase angle difference $\chi$ rotated during the time $\tau_{c1}$
 between these two components equals $22\tau_{c1}/5 > 2\pi$.  That is, the
 relative phase angle between two components has to rotate more than one
 complete circle in order to evolve $|\varphi_e(0)\rangle$ to its orthogonal
 complement.  This
 is a new situation for these relative phase angles rotated in all known
 \QSLs to date~\cite{PHYSCOMP96,ML98,GLM03,ZZ06,Chau10,LC13} are at most
 $2\pi$.  The implication is that for states obeying the above constraints on
 its various moments of energy, they cannot evolve to their orthogonal
 complement without some time of ``time wastage'' as some of the relative phase
 angle change must be greater than a complete circle.

 I also remark that this is the first tight \QSL with the property that the
 ``magic state'' saturating this \QSL in the case of $F = 0$ has to be at least
 four-dimensional.  The corresponding ``magic states'' for all previous \QSLs
 are at most three-dimensional~\cite{PHYSCOMP96,ML98,GLM03,ZZ06,Chau10,LC13}.
 The reason why the ``magic state'' saturating this \QSL is at least
 four-dimensional is as follows.  From the discussion on the conditions for
 equality of inequality~(3b) in the main text and the
 construction of $p_e$ that leads to the \QSL, the magic state
 $|\psi(0)\rangle$, expressed in the energy eigenbasis, must be in the form
 $\alpha_0 |0\rangle + \alpha_1 |E\rangle + \alpha_2 |-E\rangle + \alpha_3
 |11E/5\rangle + \alpha_4 |-11E/5\rangle$ with $E > 0$.  Furthermore, the
 evolution time to an orthogonal state equals $\tau_{c1}\hbar / E$.  For the
 given constraints in $\absE{2k}$'s, I arrive at $|\alpha_0|^2 = 7/15$,
 $|\alpha_1|^2 + |\alpha_2|^2 = |\alpha_3|^2 + |\alpha_4|^2 = 4/15$.  Consider
 the imaginary part of $\langle\psi(0)|\psi(\tau_{c1}\hbar / E)\rangle$, I have
 $(|\alpha_2|^2 - |\alpha_1|^2) \sin \tau_{c1} + (|\alpha_4|^2 - |\alpha_3|^2)
 \sin (11\tau_{c1}/5) = 0$.  Hence, at most one of the $\alpha_j$'s can be
 zero.  Thus, the ``magic state'' is at least four-dimensional.

\section{Existence Of A \QSL For The Reverse Problem For Finite-Dimensional
 Hilbert Space Systems \label{App:reverse_problem}}

 Denote the state at time $\tau$ under the evolution of the time-independent
 Hamiltonian $H$ in a $d$-dimensional Hilbert space by $|\varphi(\tau)\rangle$
 with $|\varphi(0)\rangle = \sum_{j=1}^d \alpha_j |E_j\rangle$.  Surely, the
 root fidelity at time $\tau$ is given by the continuous function
\begin{displaymath}
 \sqrt{F}(\tau) = \left| \langle\varphi(0)|\varphi(\tau)\rangle \right| =
 \sum_{j=1}^d |\alpha_j|^2 \cos \left[ \frac{E_j \tau}{\hbar} - \theta(\tau)
 \right]
\end{displaymath}
 where $\theta(\tau)$ is the argument of the complex number
 $\langle\varphi(0)|\varphi(\tau)\rangle$.  Although $\theta$ can only be
 determined modulo~$2\pi$, I may uniquely fix it by the integral curve
 describing the time evolution of the argument of
 $\langle\varphi(0)|\varphi(\tau)\rangle$ with initial condition $\theta(0) =
 0$ provided that $\tau$ is less than or equal to the first time when $\sqrt{F}
 = 0$.  And from now on, I assume $\theta(\tau)$ to be this smooth integral
 curve.

 I first write down several properties of the function $\sqrt{F}(\tau)$.
 Denote the first time when $\sqrt{F}(\tau)$ reaches a certain fixed value
 $\sqrt{F_0} \in [0,1]$ by $\tau_{\min}$.  Suppose further that $\tau_{\min}$
 is finite.  I define $\tau_\text{turn}$ and $\epsilon_0$ as follows.  Suppose
 $\sqrt{F}(\tau)$ is a decreasing function in $[0,\tau_{\min}]$, then I set
 $\tau_\text{turn} = 0$ and $\epsilon_0 = 1$.  Otherwise, since $\sqrt{F}$ is
 continuous and differentiable provided that $\sqrt{F} > 0$, there is a turning
 point in $[0,\tau_{\min})$.  Denote the turning point in $[0,\tau_{\min})$
 closest to $\tau_{\min}$ by $\tau_\text{turn}$.  Then, I set $\epsilon_0 =
 \min_{\tau\in [0,\tau_\text{turn}]} \sqrt{F}(\tau) - \sqrt{F_0}$.  It is
 well-defined because the minimum exists owning to the continuity of
 $\sqrt{F}$; and it is positive for otherwise $\tau_{\min}$ will not be the
 first time when the root fidelity reaches $\sqrt{F_0}$.  Note that no matter
 whether there is a turning point for $\sqrt{F}$ in $[0,\tau_{\min}]$ or not,
 $\sqrt{F}(\tau)$ is decreasing in $[\tau_\text{turn}, \tau_{\min}]$.  More
 importantly, for $0\le \tau \le \tau_{\min}$, $\sqrt{F}(\tau) < \sqrt{F_0} +
 \epsilon_0$ only in $[\tau_\text{turn},\tau_{\min}]$.  That is to say, the
 function $\sqrt{F}(\tau)$ is one-one in the domain $[\tau_{\text{turn}'},
 \tau_{\min}]$ and range $[\sqrt{F_0},\sqrt{F_0}+\epsilon_0]$, where
 $\tau_{\text{turn}'}$ is the closest point to $\tau_{\min}$ in
 $[0,\tau_{\min})$ with $\sqrt{F}(\tau_{\text{turn}'}) = \sqrt{F_0} +
 \epsilon_0$.  Last but not least, for a sufficiently small $\delta > 0$,
\begin{displaymath}
 \sqrt{F}(\tau) - \sqrt{F_0} = \mbox{O}((\tau_{\min} - \tau)^\beta)
\end{displaymath}
 for $\tau \in (\tau_{\min} - \delta,\tau_{\min})$ for some $\beta > 0$.

 Next, I consider the \QSL induced by a polynomial $p(x)$.  Using the idea in
 the main text, suppose $\cos x \ge p(x) = \sum_{k=0}^n c_k x^k$ for all $x \ge
 x_{\min} \equiv \min_j \min_{\tau\in [0,\tau_{\min}]} [E_j \tau / \hbar -
 \theta(\tau)]$.  Then, I have a \QSL in the form of an inequality
\begin{align}
 \sqrt{F}(\tau) & \ge \sum_{j=1}^d |\alpha_j|^2 p(\frac{E_j \tau}{\hbar} -
  \theta(\tau)) \nonumber \\
 & = \sum_{j=1}^d \sum_{k=0}^n c_k |\alpha_j|^2 \left[ \frac{E_j
  \tau}{\hbar} - \theta(\tau) \right]^k \nonumber \\
 & = \sum_{k=0}^n \sum_{\ell=0}^k c_k \binom{k}{\ell} \aveE{\ell}
  \theta(\tau)^{k-\ell} \left( \frac{\tau}{\hbar} \right)^\ell
 \label{E:induced_QSL}
\end{align}
 whenever $0\le\tau\le\tau_{\min}$.

 Now, I consider the set of polynomials ${\mathcal S}_{n,\gamma}$ with the
 properties that $p(x) \in {\mathcal S}_{n,\gamma}$ if and only if
\begin{itemize}
 \item $\deg p(x) \le n$;
 \item $p(x) \le \cos x$ whenever $x \ge x_{\min}$;
 \item $p(x_j) = \cos x_j$ for all $j$, where $x_j = E_j \tau_{\min} / \hbar -
  \theta(\tau_{\min})$; and
 \item $\cos x - p(x) = \mbox{O}((x-x_j)^\gamma)$ in the neighborhood of $x_j$
  for all $j$.
\end{itemize}
 Using the Hermite interpolating polynomial construction in the main text,
 I know that for each fixed $\gamma > 1$, the set
 ${\mathcal S}_{n,\gamma}$ is non-empty provided that $n$ is sufficiently
 large.  Clearly, ${\mathcal S}_{n,\gamma}$ is a convex set.  In addition, it
 is easy to see that the functional
\begin{displaymath}
 G[f_1,f_2] = \max_{x\in [x_{\min},x_{\max}]} \left| f_1(x) - f_2(x) \right|
\end{displaymath}
 is convex where $x_{\max} = \max_j [E_j \tau_{\min} / \hbar -
 \theta(\tau_{\min})]$.

 Note that for a fixed $\gamma > 1$, there is a sequence of polynomials $p_n(x)
 \in {\mathcal S}_{n,\gamma}$ such that $\lim_{n\to +\infty} G[p_n(x),\cos x] =
 0$.  In fact, each $p_n(x)$ can be chosen to be the optimal degree $\le n$
 polynomial in ${\mathcal S}_{n,\gamma}$ that minimizes the functional
 $G[\cos x,\cdot]$ via convex optimization~\cite{BV04}.

 With the above background preparation, I am ready to prove the existence of a
 polynomial $p(x)$ that solves the reverse problem.  That is, the \QSL induced
 by $p(x)$ in inequality~\eqref{E:induced_QSL} has the property that the
 smallest non-negative time at which the R.H.S. of this inequality is
 $\sqrt{F_0}$ occurs when $\tau = \tau_{\min}$ provided that $\aveE{k}$'s are
 set to the $k$th moment of the energy of the state $|\varphi(0)\rangle$.

 I choose a sufficiently small $0 < \delta < \tau_{\min} - \tau_\text{turn}$
 such that
\begin{displaymath}
 \sqrt{F}(\tau) - \sqrt{F_0} \ge \zeta (\tau_{\min} - \tau)^{\beta}
\end{displaymath}
 for all $\tau \in (\tau_{\min} - \delta,\tau_{\min}]$, where $\zeta > 0$.  For
 this $\delta > 0$, I can find sufficiently large $\gamma$ and $n$ such that
 ${\mathcal S}_{n,\gamma}$ is non-empty and the $p(x) \in
 {\mathcal S}_{n,\gamma}$ that minimizes the functional $G[\cos x,\cdot]$ obeys
 $G[\cos x,p(x)] < \epsilon$ where
\begin{displaymath}
 \epsilon \equiv \sqrt{F}(\tau_{\min} - \delta) - \sqrt{F_0} \in (0,
 \epsilon_0] .
\end{displaymath}

 I claim that the \QSL induced by this $p(x)$ solves the reverse problem.
 This is because by my construction, for $\tau = \tau_{\min}$, the R.H.S. of
 inequality~\eqref{E:induced_QSL} equals $\sum_j |\alpha_j|^2 p(x_j) = \sum_j
 |\alpha_j|^2 \cos x_j = \sqrt{F_0} = \sqrt{F}(\tau_{\min})$.

 I proceed to consider the case of $\tau\in [0,\tau_{\min} - \delta]$.  As
 $\cos x - p(x) \le G[\cos x,p(x)] < \epsilon$ for all $x > x_{\min}$, I
 conclude that
\begin{align*}
 \sqrt{F}(\tau) & = \sum_j |\alpha_j|^2 \cos \left[ \frac{E_j \tau}{\hbar} -
 \theta(\tau) \right]  \nonumber \\
 & < \epsilon + \sum_j |\alpha_j|^2 p(\frac{E_j\tau}{\hbar} - \theta(\tau)) .
\end{align*}
 Hence,
\begin{align*}
 \sum_j |\alpha_j|^2 p(\frac{E_j\tau}{\hbar} - \theta(\tau)) & > \sqrt{F}(\tau)
  - \epsilon \nonumber \\
 & \ge \sqrt{F}(\tau_{\min} - \delta) - \epsilon \nonumber \\
 & = \sqrt{F_0} .
\end{align*}
 In other words, the R.H.S. of inequality~\eqref{E:induced_QSL} greater than
 $\sqrt{F_0}$ in the time interval $[0,\tau_{\min} - \delta]$.

 Finally, I consider the case of $\tau \in (\tau_{\min} - \delta,\tau_{\min})$.
 I have
\begin{align*}
 & \sqrt{F_0} + \zeta (\tau_{\min} - \tau)^\beta \nonumber \\
 \le & \sqrt{F}(\tau) \nonumber \\
 = & \sum_j |\alpha_j|^2 p(\frac{E_j\tau}{\hbar} - \theta(\tau)) +
 \mbox{O}((\Delta_j(\tau))^\gamma)
\end{align*}
 where $\Delta_j(\tau) = E_j (\tau_{\min} - \tau) / \hbar + \theta(\tau) -
 \theta(\tau_{\min})$.  Since $\theta(\tau)$ is smooth in this time interval,
 $\Delta_j(\tau) = \mbox{O}((\tau_{\min} - \tau)^{\gamma'})$ for some $\gamma'
 > 0$.  Therefore,
\begin{align*}
 & \sum_j |\alpha_j|^2 p(\frac{E_j\tau}{\hbar} - \theta(\tau)) \nonumber \\
 \ge & \sqrt{F_0} +
 \zeta (\tau_{\min} - \tau)^\beta + \mbox{O}((\tau_{\min} - \tau)^{\gamma
 \gamma'}) .
\end{align*}
 Since $\zeta > 0$, by picking a sufficiently small $\delta > 0$ and a
 sufficiently large $\gamma$ so that $\gamma\gamma' > \beta$ (and a sufficient
 large $n$ so that ${\mathcal S}_{n,\gamma}$ is non-empty), the R.H.S. of
 inequality~\eqref{E:induced_QSL} is greater than $\sqrt{F_0}$ in this time
 interval.

 To summarize, the \QSL induced by any $p(x)$ in this ${\mathcal S}_{n,\beta}$
 is a solution of the reverse problem.  This completes the proof of my claim.

 Lastly, let me make the following remark.  Suppose $0\le \tau_1 < \tau_2 <
 \ldots < \tau_\ell$ are $\ell$ distinct numbers with $\sqrt{F}(\tau_j) =
 \sqrt{F_0}$ for all $j$, where $\sqrt{F}(\tau)$ is the root fidelity between
 $|\varphi(\tau)\rangle$ and $|\varphi(0)\rangle$ under the action of a
 time-independent Hamiltonian $H$.  Then, it is not difficult to adapt the
 above procedure to construct a polynomial whose induced \QSL has the
 properties that
\begin{itemize}
 \item the induced \QSL is an equality at times $\tau_1, \ldots , \tau_\ell$
  provided that $\sqrt{F} = \sqrt{F_0}$ and $\aveE{k}$ is the $k$th moment of
  the average energy of the state $|\varphi(0)\rangle$;
 \item the induced \QSL is a strict inequality at time $\tau \in [0,\tau_\ell]$
  provided that $\sqrt{F}(\tau) > F_0$.
\end{itemize}
 The proof is left to the interested readers.

\bibliographystyle{apsrev4-1}
\bibliography{qc59.5}

\begin{thebibliography}{20}%
\makeatletter
\providecommand \@ifxundefined [1]{%
 \@ifx{#1\undefined}
}%
\providecommand \@ifnum [1]{%
 \ifnum #1\expandafter \@firstoftwo
 \else \expandafter \@secondoftwo
 \fi
}%
\providecommand \@ifx [1]{%
 \ifx #1\expandafter \@firstoftwo
 \else \expandafter \@secondoftwo
 \fi
}%
\providecommand \natexlab [1]{#1}%
\providecommand \enquote  [1]{``#1''}%
\providecommand \bibnamefont  [1]{#1}%
\providecommand \bibfnamefont [1]{#1}%
\providecommand \citenamefont [1]{#1}%
\providecommand \href@noop [0]{\@secondoftwo}%
\providecommand \href [0]{\begingroup \@sanitize@url \@href}%
\providecommand \@href[1]{\@@startlink{#1}\@@href}%
\providecommand \@@href[1]{\endgroup#1\@@endlink}%
\providecommand \@sanitize@url [0]{\catcode `\\12\catcode `\$12\catcode
  `\&12\catcode `\#12\catcode `\^12\catcode `\_12\catcode `\%12\relax}%
\providecommand \@@startlink[1]{}%
\providecommand \@@endlink[0]{}%
\providecommand \url  [0]{\begingroup\@sanitize@url \@url }%
\providecommand \@url [1]{\endgroup\@href {#1}{\urlprefix }}%
\providecommand \urlprefix  [0]{URL }%
\providecommand \Eprint [0]{\href }%
\providecommand \doibase [0]{http://dx.doi.org/}%
\providecommand \selectlanguage [0]{\@gobble}%
\providecommand \bibinfo  [0]{\@secondoftwo}%
\providecommand \bibfield  [0]{\@secondoftwo}%
\providecommand \translation [1]{[#1]}%
\providecommand \BibitemOpen [0]{}%
\providecommand \bibitemStop [0]{}%
\providecommand \bibitemNoStop [0]{.\EOS\space}%
\providecommand \EOS [0]{\spacefactor3000\relax}%
\providecommand \BibitemShut  [1]{\csname bibitem#1\endcsname}%
\let\auto@bib@innerbib\@empty
\bibitem [{\citenamefont {Lloyd}(2000)}]{Lloyd00}%
  \BibitemOpen
  \bibfield  {author} {\bibinfo {author} {\bibfnamefont {S.}~\bibnamefont
  {Lloyd}},\ }\href@noop {} {\bibfield  {journal} {\bibinfo  {journal}
  {Nature}\ }\textbf {\bibinfo {volume} {406}},\ \bibinfo {pages} {1047}
  (\bibinfo {year} {2000})}\BibitemShut {NoStop}%
\bibitem [{\citenamefont {Margolus}\ and\ \citenamefont
  {Levitin}(1996)}]{PHYSCOMP96}%
  \BibitemOpen
  \bibfield  {author} {\bibinfo {author} {\bibfnamefont {N.}~\bibnamefont
  {Margolus}}\ and\ \bibinfo {author} {\bibfnamefont {L.~B.}\ \bibnamefont
  {Levitin}},\ }in\ \href@noop {} {\emph {\bibinfo {booktitle} {Proceedings of
  the 4th workshop on physics and computation (PHYSCOMP 96)}}},\ \bibinfo
  {editor} {edited by\ \bibinfo {editor} {\bibfnamefont {T.}~\bibnamefont
  {Toffoli}}, \bibinfo {editor} {\bibfnamefont {M.}~\bibnamefont {Biafore}}, \
  and\ \bibinfo {editor} {\bibfnamefont {J.}~\bibnamefont {Lea{\~{o}}}}}\
  (\bibinfo  {publisher} {New England Complex Systems Institute},\ \bibinfo
  {address} {Cambridge, MA},\ \bibinfo {year} {1996})\ p.\ \bibinfo {pages}
  {208}\BibitemShut {NoStop}%
\bibitem [{\citenamefont {Margolus}\ and\ \citenamefont
  {Levitin}(1998)}]{ML98}%
  \BibitemOpen
  \bibfield  {author} {\bibinfo {author} {\bibfnamefont {N.}~\bibnamefont
  {Margolus}}\ and\ \bibinfo {author} {\bibfnamefont {L.~B.}\ \bibnamefont
  {Levitin}},\ }\href@noop {} {\bibfield  {journal} {\bibinfo  {journal}
  {Physica D}\ }\textbf {\bibinfo {volume} {120}},\ \bibinfo {pages} {188}
  (\bibinfo {year} {1998})}\BibitemShut {NoStop}%
\bibitem [{\citenamefont {Giovannetti}\ \emph {et~al.}(2003)\citenamefont
  {Giovannetti}, \citenamefont {Lloyd},\ and\ \citenamefont {Maccone}}]{GLM03}%
  \BibitemOpen
  \bibfield  {author} {\bibinfo {author} {\bibfnamefont {V.}~\bibnamefont
  {Giovannetti}}, \bibinfo {author} {\bibfnamefont {S.}~\bibnamefont {Lloyd}},
  \ and\ \bibinfo {author} {\bibfnamefont {L.}~\bibnamefont {Maccone}},\
  }\href@noop {} {\bibfield  {journal} {\bibinfo  {journal} {Phys. Rev. A}\
  }\textbf {\bibinfo {volume} {67}},\ \bibinfo {pages} {052109} (\bibinfo
  {year} {2003})}\BibitemShut {NoStop}%
\bibitem [{\citenamefont {Mandelstam}\ and\ \citenamefont {Tamm}(1945)}]{MT45}%
  \BibitemOpen
  \bibfield  {author} {\bibinfo {author} {\bibfnamefont {L.}~\bibnamefont
  {Mandelstam}}\ and\ \bibinfo {author} {\bibfnamefont {I.}~\bibnamefont
  {Tamm}},\ }\href@noop {} {\bibfield  {journal} {\bibinfo  {journal} {J. Phys.
  (USSR)}\ }\textbf {\bibinfo {volume} {9}},\ \bibinfo {pages} {249} (\bibinfo
  {year} {1945})}\BibitemShut {NoStop}%
\bibitem [{\citenamefont {Fleming}(1973)}]{Fleming73}%
  \BibitemOpen
  \bibfield  {author} {\bibinfo {author} {\bibfnamefont {G.~N.}\ \bibnamefont
  {Fleming}},\ }\href@noop {} {\bibfield  {journal} {\bibinfo  {journal} {Nuovo
  Cimento A}\ }\textbf {\bibinfo {volume} {16}},\ \bibinfo {pages} {232}
  (\bibinfo {year} {1973})}\BibitemShut {NoStop}%
\bibitem [{\citenamefont {Bhattacharyya}(1983)}]{Bhattacharyya83}%
  \BibitemOpen
  \bibfield  {author} {\bibinfo {author} {\bibfnamefont {K.}~\bibnamefont
  {Bhattacharyya}},\ }\href@noop {} {\bibfield  {journal} {\bibinfo  {journal}
  {J. Phys. A}\ }\textbf {\bibinfo {volume} {16}},\ \bibinfo {pages} {2993}
  (\bibinfo {year} {1983})}\BibitemShut {NoStop}%
\bibitem [{\citenamefont {Uhlmann}(1992)}]{Uhlmann92}%
  \BibitemOpen
  \bibfield  {author} {\bibinfo {author} {\bibfnamefont {A.}~\bibnamefont
  {Uhlmann}},\ }\href@noop {} {\bibfield  {journal} {\bibinfo  {journal} {Phys.
  Lett. A}\ }\textbf {\bibinfo {volume} {161}},\ \bibinfo {pages} {329}
  (\bibinfo {year} {1992})}\BibitemShut {NoStop}%
\bibitem [{\citenamefont {Pfeifer}(1993)}]{Pfeifer93}%
  \BibitemOpen
  \bibfield  {author} {\bibinfo {author} {\bibfnamefont {P.}~\bibnamefont
  {Pfeifer}},\ }\href@noop {} {\bibfield  {journal} {\bibinfo  {journal} {Phys.
  Rev. Lett.}\ }\textbf {\bibinfo {volume} {70}},\ \bibinfo {pages} {3365}
  (\bibinfo {year} {1993})},\ \bibinfo {note} {and erratum in \textit{Phys.
  Rev. Lett.} \textbf{70}, 306 (1993)}\BibitemShut {NoStop}%
\bibitem [{\citenamefont {Zieli{\'n}ski}\ and\ \citenamefont
  {Zych}(2006)}]{ZZ06}%
  \BibitemOpen
  \bibfield  {author} {\bibinfo {author} {\bibfnamefont {B.}~\bibnamefont
  {Zieli{\'n}ski}}\ and\ \bibinfo {author} {\bibfnamefont {M.}~\bibnamefont
  {Zych}},\ }\href@noop {} {\bibfield  {journal} {\bibinfo  {journal} {Phys.
  Rev. A}\ }\textbf {\bibinfo {volume} {74}},\ \bibinfo {pages} {034301}
  (\bibinfo {year} {2006})}\BibitemShut {NoStop}%
\bibitem [{\citenamefont {Jones}\ and\ \citenamefont {Kok}(2010)}]{JK10}%
  \BibitemOpen
  \bibfield  {author} {\bibinfo {author} {\bibfnamefont {P.~J.}\ \bibnamefont
  {Jones}}\ and\ \bibinfo {author} {\bibfnamefont {P.}~\bibnamefont {Kok}},\
  }\href@noop {} {\bibfield  {journal} {\bibinfo  {journal} {Phys. Rev. A}\
  }\textbf {\bibinfo {volume} {82}},\ \bibinfo {pages} {022107} (\bibinfo
  {year} {2010})}\BibitemShut {NoStop}%
\bibitem [{\citenamefont {Chau}(2010)}]{Chau10}%
  \BibitemOpen
  \bibfield  {author} {\bibinfo {author} {\bibfnamefont {H.~F.}\ \bibnamefont
  {Chau}},\ }\href@noop {} {\bibfield  {journal} {\bibinfo  {journal} {Phys.
  Rev. A}\ }\textbf {\bibinfo {volume} {81}},\ \bibinfo {pages} {062133}
  (\bibinfo {year} {2010})}\BibitemShut {NoStop}%
\bibitem [{\citenamefont {Lee}\ and\ \citenamefont {Chau}(2013)}]{LC13}%
  \BibitemOpen
  \bibfield  {author} {\bibinfo {author} {\bibfnamefont {K.~Y.}\ \bibnamefont
  {Lee}}\ and\ \bibinfo {author} {\bibfnamefont {H.~F.}\ \bibnamefont {Chau}},\
  }\href@noop {} {\bibfield  {journal} {\bibinfo  {journal} {J. Phys. A}\
  }\textbf {\bibinfo {volume} {46}},\ \bibinfo {pages} {015305} (\bibinfo
  {year} {2013})}\BibitemShut {NoStop}%
\bibitem [{\citenamefont {Levitin}\ and\ \citenamefont {Toffoli}(2009)}]{LT09}%
  \BibitemOpen
  \bibfield  {author} {\bibinfo {author} {\bibfnamefont {L.~B.}\ \bibnamefont
  {Levitin}}\ and\ \bibinfo {author} {\bibfnamefont {T.}~\bibnamefont
  {Toffoli}},\ }\href@noop {} {\bibfield  {journal} {\bibinfo  {journal} {Phys.
  Rev. Lett.}\ }\textbf {\bibinfo {volume} {103}},\ \bibinfo {pages} {160502}
  (\bibinfo {year} {2009})}\BibitemShut {NoStop}%
\bibitem [{\citenamefont {Stoer}\ and\ \citenamefont {Bulirsch}(2002)}]{SB02}%
  \BibitemOpen
  \bibfield  {author} {\bibinfo {author} {\bibfnamefont {J.}~\bibnamefont
  {Stoer}}\ and\ \bibinfo {author} {\bibfnamefont {R.}~\bibnamefont
  {Bulirsch}},\ }\href@noop {} {\emph {\bibinfo {title} {Introduction to
  numerical analysis}}},\ \bibinfo {edition} {3rd}\ ed.\ (\bibinfo  {publisher}
  {Springer},\ \bibinfo {address} {Berlin},\ \bibinfo {year} {2002})\ \bibinfo
  {note} {\S2.1.5}\BibitemShut {NoStop}%
\bibitem [{\citenamefont {McNamee}(2007)}]{McNamee07}%
  \BibitemOpen
  \bibfield  {author} {\bibinfo {author} {\bibfnamefont {J.~M.}\ \bibnamefont
  {McNamee}},\ }\href@noop {} {\emph {\bibinfo {title} {Numerical methods for
  roots of polynomials. {P}art~{I}}}}\ (\bibinfo  {publisher} {Elsevier},\
  \bibinfo {address} {Amsterdam},\ \bibinfo {year} {2007})\BibitemShut
  {NoStop}%
\bibitem [{\citenamefont {Alefeld}\ and\ \citenamefont
  {Herzberger}(1983)}]{AH83}%
  \BibitemOpen
  \bibfield  {author} {\bibinfo {author} {\bibfnamefont {G.}~\bibnamefont
  {Alefeld}}\ and\ \bibinfo {author} {\bibfnamefont {J.}~\bibnamefont
  {Herzberger}},\ }\href@noop {} {\emph {\bibinfo {title} {Introduction to
  interval computations}}},\ \bibinfo {edition} {2nd}\ ed.\ (\bibinfo
  {publisher} {Academic Press},\ \bibinfo {address} {New York},\ \bibinfo
  {year} {1983})\BibitemShut {NoStop}%
\bibitem [{\citenamefont {Pan}(1997)}]{Pan97}%
  \BibitemOpen
  \bibfield  {author} {\bibinfo {author} {\bibfnamefont {V.~Y.}\ \bibnamefont
  {Pan}},\ }\href@noop {} {\bibfield  {journal} {\bibinfo  {journal} {SIAM
  Rev.}\ }\textbf {\bibinfo {volume} {39}},\ \bibinfo {pages} {187} (\bibinfo
  {year} {1997})}\BibitemShut {NoStop}%
\bibitem [{\citenamefont {Uhlmann}(1995)}]{Uhlmann95}%
  \BibitemOpen
  \bibfield  {author} {\bibinfo {author} {\bibfnamefont {A.}~\bibnamefont
  {Uhlmann}},\ }\href@noop {} {\bibfield  {journal} {\bibinfo  {journal} {Rep.
  Math. Phys.}\ }\textbf {\bibinfo {volume} {36}},\ \bibinfo {pages} {461}
  (\bibinfo {year} {1995})}\BibitemShut {NoStop}%
\bibitem [{\citenamefont {Boyd}\ and\ \citenamefont
  {Vandenberghe}(2004)}]{BV04}%
  \BibitemOpen
  \bibfield  {author} {\bibinfo {author} {\bibfnamefont {S.}~\bibnamefont
  {Boyd}}\ and\ \bibinfo {author} {\bibfnamefont {L.}~\bibnamefont
  {Vandenberghe}},\ }\href@noop {} {\emph {\bibinfo {title} {Convex
  optimization}}}\ (\bibinfo  {publisher} {CUP},\ \bibinfo {address}
  {Cambridge, U.K.},\ \bibinfo {year} {2004})\BibitemShut {NoStop}%
\end{thebibliography}%

\end{document}